\theoremstyle{thmstyleone}%
\newtheorem{theorem}{Theorem}
\newtheorem{corollary}[theorem]{Corollary}
\theoremstyle{thmstyletwo}%
\theoremstyle{thmstylethree}%
\begin{document}

\title[Article Title]{The airplane refueling problem is NP-complete and is solvable in polynomial time}


\author*[1]{\fnm{Jinchuan} \sur{Cui}}\email{cjc@amss.ac.cn}

\author[1]{\fnm{Xiaoya} \sur{Li}}\email{xyli@amss.ac.cn}


\affil[1]{\orgdiv{Academy of Mathematics and Systems Science}, \orgname{Chinese Academy of Sciences}, \orgaddress{\street{No. 55 Zhongguancun East Road}, \city{Beijing}, \postcode{100190}, \country{China}}}




\abstract{The airplane refueling problem is a nonlinear combinatorial optimization problem, and its equivalent problem the $n$-vehicle exploration problem is proved to be NP-complete \cite{Cui23NP}. In Article \cite{Cui23P}, we designed the sequential search algorithm for solving large scale of airplane refueling instances, and we proved that the computational complexity increases to polynomial time with increasing number of airplanes. Thus the airplane refueling problem, as an NP-complete problem, is solvable in polynomial time when its input scale is sufficiently large.}

\keywords{Airplane refueling problem (ARP), $n$-vehicle exploration problem (NVEP), Combinatorial optimization problem, polynomial-time algorithm, NP-completeness}



\maketitle

\section{Introduction}
\label{intro}

%
%

\qquad The Airplane Refueling Problem (ARP) was raised by Woeginger \cite{woeginger10} from a math puzzle problem \cite{puzzle58}. Suppose there are $n$ airplanes with mid-air refueling technique referred to $A_1, \cdots, A_n$, each $A_i$ can carry $v_i$ tanks of fuel, and consumes $c_i$ tanks of fuel per kilometers for $1\leq i \leq n$. The fleet starts to fly together to a same target without getting fuel from outside, but each airplane can refuel to other airplanes instantaneously during the trip and then be dropped out. The goal is to determine a drop out permutation $\pi = (\pi(1), \cdots, \pi(n))$ that maximize the flight of the last remaining airplane.

Previous research on ARP centralized on its complexity analysis and its algorithm design \cite{vasquez15,gamzu19,lijs19}.  In Article \cite{Cui23P}, we designed the sequential search algorithm for solving large scale of airplane refueling instances, and proved that the computational complexity turns to be polynomial time solvable with increasing number of airplanes. Related work also focused on equivalent problems of ARP such as single machine scheduling problem \cite{hohn13,vasquez17} and the $n$-vehicle exploration problem \cite{lixy09,yu18,zhang21}. In Article \cite{Cui23NP} we proved that NVEP is NP-complete.

\section{Background}
\label{sec:1}

\qquad Let's consider an ordering $\pi$ and its related sequence $A_{\pi(1)} \Rightarrow A_{\pi(2)} \Rightarrow \cdots \Rightarrow A_{\pi(n)}$, in which $A_{\pi(i)}$ refuels to $A_{\pi(j)}$ for $i < j$. As is shown in Fig. \ref{fig:arpmodel} the maximal distance $S_{\pi}$ associated with the ordering $\pi$, and $x_{\pi(i)}$ denotes the distance that $A_{\pi(i)}$ travels farther than $A_{\pi(i-1)}$, which is also the distance that $A_{\pi(i)}$ contributes to the total flight.

\begin{figure}[ht]
  \includegraphics[width=0.65\textwidth]{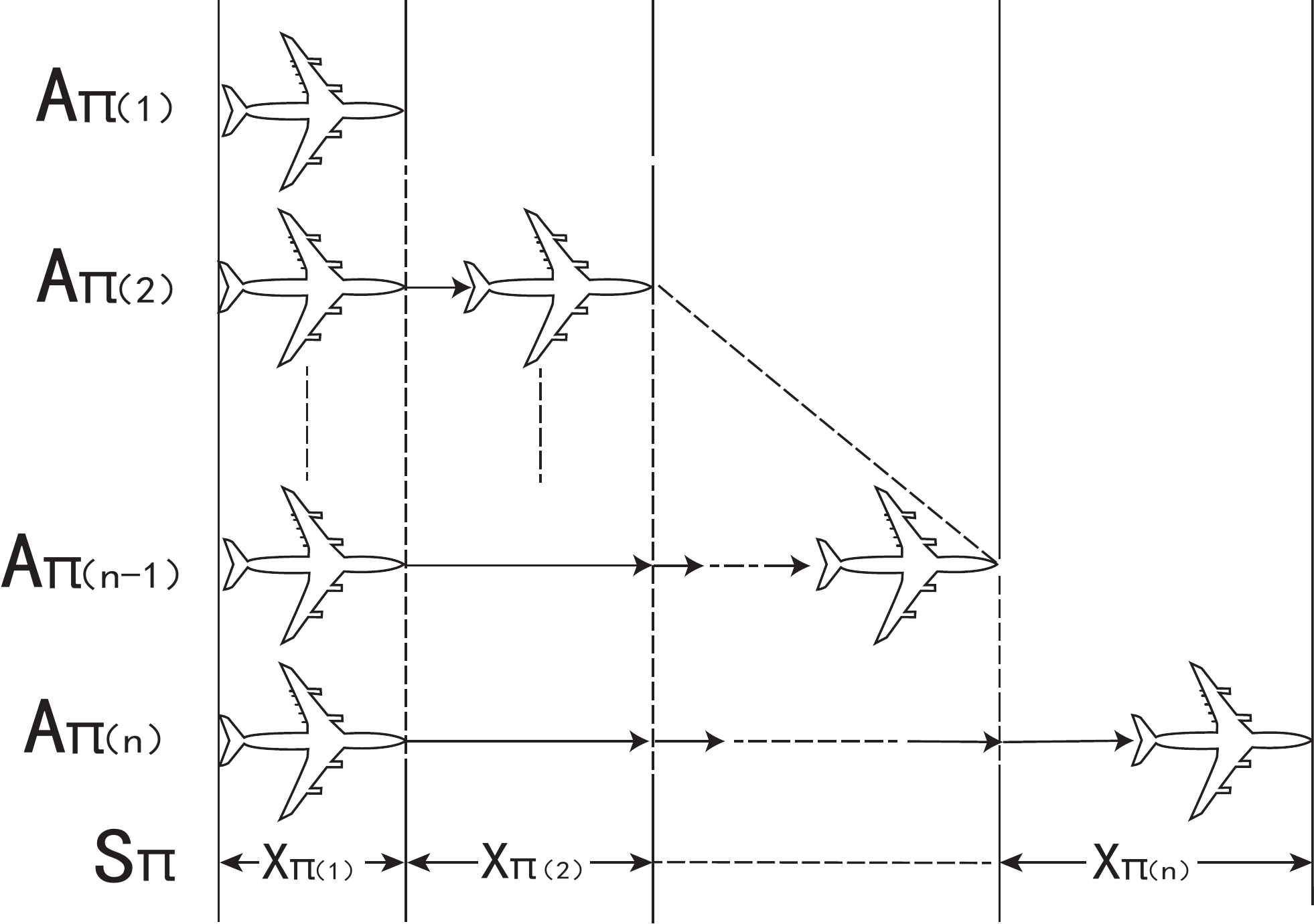}
  \caption{Description of ARP with ordering $\pi$.}
  \label{fig:arpmodel}
\end{figure}

Let $S_{\pi}=\sum\limits_{i=1}^n{x_{\pi(i)}}$ denotes the maximal length that $A_{\pi(n)}$ can approach,  then the goal of ARP is to find a drop out permutation $\pi = (\pi(1), \pi(2), \cdots, \pi(n))$ that maximizes $S_{\pi}$ shown in equation (\ref{eqt1}).


\begin{equation}
S_{\pi} = \frac{v_{\pi(1)}}{c_{\pi(1)}
\cdots + c_{\pi(n)}} + \cdots + \frac{v_{\pi(n)}}{c_{\pi(n)}}\label{eqt1}
\end{equation}

We use $C$ to denote the cumulative sums of fuel consumption rates, induced by an ordering $\pi$. According to equation~(\ref{eqt1}), $C$ is represented as follows.

\begin{equation}
\left( c_{\pi(1)} + \cdots + c_{\pi(n)} \right),  \left( c_{\pi(2)} + \cdots + c_{\pi(n)} \right), \cdots, c_{\pi(n)} \label{eqt2}
\end{equation}

Since each $S_{\pi}$ corresponds to a permutation of $n$ airplanes, there are totally $n!$ of $S_{\pi}$ needs to be compared.

The $n$-vehicle exploration problem is motivated by a problem in a contest puzzle in China \cite{shuxue}: there are $n$ vehicles $V_1, \cdots, V_n$ with fuel capacities $a_i$ and fuel consumption rates $b_i$ for $(1\leq i \leq n)$. It is assumed that these $n$ vehicles start toward the same direction from the same point at the same time. During the trip, they can not get fuel from outside, but at any point any vehicle can stop and transfer its remaining fuel to other vehicles. Similar with ARP, the goal of the NVEP is to determine a drop out permutation $\pi = \{\pi(1), \cdots, \pi(n)\}$ for the vehicles that maximize the traveled length by the last vehicle. Then given an arbitrary drop out order $\pi$, the traveled length of $V_{\pi(n)}$ is shown in (\ref{eqt3}).

\begin{equation}
D_{\pi} = \sum\limits_{j=1}^n(a_{\pi(j)}/\sum\limits_{k=j}^nb_{\pi(k)})\label{eqt3}
\end{equation}

\section{ARP is NP-complete}
\label{sec:2}

\begin{theorem}\label{thm:1} ARP is NP-complete.\end{theorem}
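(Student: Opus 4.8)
The plan is to establish both membership in NP and NP-hardness for the decision version of ARP: given airplane data $\{(v_i,c_i)\}_{i=1}^n$ and a threshold $T$, does there exist a permutation $\pi$ with $S_\pi \geq T$? For membership, observe that a permutation $\pi$ is itself a polynomial-size certificate; evaluating $S_\pi$ via equation~(\ref{eqt1}) needs only the cumulative consumption sums of (\ref{eqt2}) together with $n$ divisions and additions, all polynomial in the input size, after which a single comparison against $T$ settles the question. Hence ARP lies in NP.

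For NP-hardness I would exploit the structural identity between the ARP objective $S_\pi$ in (\ref{eqt1}) and the NVEP objective $D_\pi$ in (\ref{eqt3}). Placing the two expressions side by side, the substitution $a_i \mapsto v_i$ and $b_i \mapsto c_i$ carries the NVEP summand $a_{\pi(j)}/\sum_{k=j}^n b_{\pi(k)}$ exactly onto the ARP summand $v_{\pi(i)}/\sum_{k=i}^n c_{\pi(k)}$, so that $D_\pi = S_\pi$ for every permutation $\pi$. This identity map is computable in linear time and preserves both the optimal permutation and its value, so the same threshold $T$ transfers unchanged; it therefore constitutes a polynomial-time reduction between the decision versions of NVEP and ARP.

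With the reduction available, NP-hardness of ARP is immediate: since NVEP is NP-complete by \cite{Cui23NP}, a polynomial-time algorithm for the ARP decision problem would decide NVEP in polynomial time. Combined with membership in NP, this establishes that ARP is NP-complete.

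I expect the genuine difficulty to reside not in this transfer, which is essentially bookkeeping, but in the antecedent result \cite{Cui23NP} that NVEP is NP-hard. That is where one must build a real reduction from a classical NP-complete problem, such as a partition-type or subset-sum-type problem, and arrange the capacities and consumption rates so that the nonlinear reciprocal objective faithfully encodes the combinatorial constraint. Here I take that result as given and only verify that the equivalence leaves the NP-membership argument intact.
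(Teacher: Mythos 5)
Your proposal is correct and follows essentially the same route as the paper: NP-membership via the permutation certificate, and NP-hardness via the identity reduction $v_i = a_i$, $c_i = b_i$ from NVEP, whose NP-completeness is taken from \cite{Cui23NP}. In fact your write-up is somewhat cleaner than the paper's, since you state explicitly that $D_\pi = S_\pi$ for every permutation and that the threshold transfers unchanged, whereas the paper's claim conflates the threshold $S$ with $n$.
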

\begin{proof}It is easy to see that ARP is in $\mathcal{NP}$: The certificate could be a permutation of the airplanes, and a certifier checks that the order contains each airplane exactly once and the length of the corresponding flight is greater than $S$.

We now show that NVEP $\leq_P$ ARP. Given an NVEP instance with $n$ vehicles, we define an ARP instance with $n$ airplanes. Let $v_i = a_i$ and $c_i = b_i$ for each $1 \leq i \leq n$. We claim that there is a running order $\pi$ for the NVEP instance such that $D_{\pi} \geq n$, if and only if there is a sequence of length that is no less than $n$ in our ARP instance.

If the NVEP has a running order such that $D_{\pi} \geq S$, then this ordering of the corresponding airplanes defines a running order of length greater than $S$. Conversely, suppose there is a sequence of length $S_{\pi}$ that is greater than $S$, the corresponding running order of length must be greater than $S$.\end{proof}

\section{Main result}
\label{sec:3}

\qquad We proposed the definition of the sequential feasible solution \cite{Cui23P}, which has following characteristics: if an airplane refueling problem has feasible solutions, it must have sequential feasible solutions, and its optimal feasible solution must be the optimal sequential feasible solution. It means that an algorithm by enumerating all the sequential feasible solutions must be an optimal algorithm for ARP, then an sequential search algorithm was proposed, ant its computational complexity depends on the number of sequential feasible solutions referred to $Q_n$. It is shown that $Q_n \leq 2^{n-2}$ for worst case. However, we found that $Q_n$ is no longer exponential function when the input scale is large enough. Then we prove that $Q_n$ is bounded by $\frac{m^2}{n}C_n^m$ when $n$ is greater than $2m$. Here $m$ is a constant and $2m$ is regarded as the "inflection point" of the complexity of the sequential search algorithm from exponential time to polynomial time.

In Book \cite{garey79} (Figure 2.6), $\mathcal{NP}$ is divided into "the land of $\mathcal{P}$" and "the land of NP-complete". If $\mathcal{P}$ differs from $\mathcal{NP}$, then there must exist problems in $\mathcal{NP}$ that are neither solvable in polynomial time nor NP-complete. We have shown the  NP-completeness of ARP in Theorem \ref{thm:1}, we now show that ARP is solvable in polynomial time in Theorem \ref{thm:2}.

\begin{theorem}\label{thm:2} \cite{Cui23P} For large scale of $n$-airplane instances with $v_1 / c_1^2 > \cdots > v_n / c_n^2$ and $v_1 / c_1 < \cdots < v_n / c_n$ under the assumptions of $v_n / c_n \leq M$. There exist an index $m$ and $N = 2m$, such that $Q_n < \frac{m^2}{n}C_n^m$ for any $n > N$. In addition, when $n > N$, $Q_n^{(m)}$ is upper bounded by $n^m$, and the ARP is solvable in polynomial time.\end{theorem}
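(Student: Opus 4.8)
The plan is to reduce the running-time claim to a purely combinatorial estimate on $Q_n$, the number of sequential feasible solutions maintained by the sequential search algorithm, and then to bound that number polynomially using the two monotonicity hypotheses together with the boundedness assumption $v_n/c_n \le M$. Because the algorithm spends only polynomial work per sequential feasible solution (constructing and comparing the associated partial sequences), any bound of the form $Q_n < \frac{m^2}{n}C_n^m$ with $m$ constant in $n$ immediately forces an overall polynomial running time; so the entire theorem rests on the combinatorial estimate, and I would treat the algorithmic conclusion as a corollary of it.

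First I would make the exchange structure explicit, since it is the engine that creates multiple candidate orderings. For two airplanes $i,j$ placed in adjacent positions with cumulative tail consumption $R$ behind them, comparing the two local orderings reduces, after cancelling the common denominator, to the sign of
\begin{equation}
N_{ij}(R) = (v_j c_i - v_i c_j) + R\,(v_j - v_i),
\end{equation}
so that the preferred relative order of the pair is governed by a single critical tail value $R^{\ast}_{ij}$ at which $N_{ij}$ vanishes. The two hypotheses $v_1/c_1 < \cdots < v_n/c_n$ and $v_1/c_1^2 > \cdots > v_n/c_n^2$ encode the two competing sorting criteria for this family, and their interaction is exactly what fixes, pairwise, the location of $R^{\ast}_{ij}$ relative to the achievable tail sums. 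I would then characterize a sequential feasible solution by the set of positions at which the realized tail sum straddles such a critical value, i.e.\ by the positions where a genuine branching in the order is possible; the worst-case bound $Q_n \le 2^{n-2}$ corresponds to every position being allowed to branch independently.

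Next I would use the boundedness $v_n/c_n \le M$ to cap the number of branchings. The achievable tail sums are controlled by $M$, so only the critical values $R^{\ast}_{ij}$ lying inside a window whose size depends on $M$ and on the spacing of the monotone sequences can actually be crossed; this is precisely what yields a constant index $m$, independent of $n$, beyond which no further branching can occur, and it is here that $m$ and the inflection point $N = 2m$ are defined. Granting that each sequential feasible solution is then determined by at most $m$ branch positions among the $n$ available, the count is dominated by the number of ways to choose these positions, giving $Q_n \le C_n^m$; a more careful accounting of which position-sets are simultaneously admissible, valid once $n > 2m$, tightens this to $Q_n < \frac{m^2}{n}C_n^m$. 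Since $C_n^m \le n^m$, the refined quantity $Q_n^{(m)}$ is bounded by $n^m$, a polynomial in $n$ for fixed $m$, which closes the argument.

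The main obstacle is the middle step: proving rigorously that the number of \emph{independent} branchings in any sequential feasible solution is bounded by a constant $m$ determined by $M$ alone and not by $n$. This requires showing that the critical values $R^{\ast}_{ij}$ reachable by a realized tail sum are confined to a bounded window, so that at most a constant number of pairs remain genuinely ambiguous, and that this constant is exactly the $m$ at which the exponential regime shuts off. Establishing the sharp threshold $N = 2m$ (rather than some larger constant) is the delicate quantitative point; I expect it to come out of comparing $\frac{m^2}{n}C_n^m$ against $2^{n-2}$, locating their crossover, and using the boundedness assumption to pin down $m$ and verify that for every $n > 2m$ the exponential worst case is unattainable and collapses to the stated polynomial estimate.
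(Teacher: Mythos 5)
First, a point of comparison: this paper does not actually contain a proof of Theorem~\ref{thm:2}. The theorem is imported verbatim from \cite{Cui23P}, and all the present paper offers is the informal paragraph around Fig.~\ref{fig:P/NP} (monotone cumulative sums, bounded single-airplane flight length, existence of an inflection point $2m$). Your high-level plan is consistent with that sketch, but judged as a proof it has a genuine gap, and it is precisely the step you flagged yourself: the existence of an index $m$ \emph{independent of $n$}. Your argument runs: the reachable tail sums are confined to a window controlled by $M$, hence only boundedly many critical values $R^{\ast}_{ij}$ can be crossed, hence at most $m$ branch positions. The second ``hence'' does not follow. The number of pairs $(i,j)$ grows quadratically in $n$, and nothing you say prevents the critical values of arbitrarily many pairs from accumulating inside that bounded window: a bounded window bounds the \emph{range} of tail sums, not the \emph{number} of pairs whose critical values fall in that range. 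Closing this requires a quantitative growth argument, not a confinement argument. For instance, from $v_1/c_1 < \cdots < v_n/c_n$ together with $v_1/c_1^2 > \cdots > v_n/c_n^2$ one deduces that $c_i = (v_i/c_i)\big/(v_i/c_i^2)$ is strictly increasing, so every tail of $k$ airplanes has cumulative consumption at least $k\,c_1$, growing linearly in $k$; combined with $v_i/c_i \le M$ this is the kind of estimate that forces all but a bounded number of positions to be unambiguous. That estimate \emph{is} the theorem; without it you have a plan whose hardest step is an assertion.

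Even granting a constant $m$, three further steps are asserted rather than proved. (a) The refinement from $Q_n \le C_n^m$ to $Q_n < \frac{m^2}{n}C_n^m$ is announced as ``a more careful accounting'' but never performed; since $\frac{m^2}{n} < 1$ exactly when $n > m^2$, this is a strict strengthening that needs real work. (b) The threshold $N = 2m$ cannot be obtained, as you propose, by ``comparing $\frac{m^2}{n}C_n^m$ against $2^{n-2}$ and locating their crossover'': the crossover of two \emph{upper bounds} says nothing about where $Q_n$ itself changes regime; the threshold has to fall out of the structural argument. (c) The concluding claim that ARP ``is solvable in polynomial time'' requires $m$ to be an absolute constant, but on your own account $m$ depends on $M$ and on the spacing of the two monotone sequences, i.e., on the instance family. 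At best you obtain a bound of the form $n^{m}$ with $m$ fixed per family (exactly the situation of Corollary~\ref{cor:1} with its $\bar{m}$), which is polynomial only after the family is frozen --- a dependence your write-up, like the theorem statement itself, quantifies over silently. Making that dependence explicit is not optional polish; it is the difference between the combinatorial bound you sketch and the complexity-theoretic conclusion the theorem claims.
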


\begin{corollary}\label{cor:1} \cite{Cui23P} Suppose $\mathcal{A}$ is a set of $n$ airplanes with $S_A \leq v_1 / c_1 < \cdots < v_n / c_n \leq S_B$. There must be an index $\bar{m}$ related to $\mathcal{A}$, such that for any ARP instance chosen from $\mathcal{A}$, its "inflection point" related $2m$ must be less than $2\bar{m}$.\end{corollary}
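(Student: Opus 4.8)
The plan is to regard the inflection point as a function of the instance data and to show that this function is uniformly bounded as soon as every ratio $v_i/c_i$ is confined to the fixed interval $[S_A, S_B]$. First I would recall precisely how Theorem \ref{thm:2} pins down the index $m$: it is the threshold at which the count $Q_n$ of sequential feasible solutions stops growing exponentially and the bound $Q_n < \frac{m^2}{n}C_n^m$ takes effect. The crucial observation is that the criterion defining $m$ enters only through the \emph{ordered} ratios $v_i/c_i$ and $v_i/c_i^2$, and in particular through the gaps between consecutive ratios. Since every instance drawn from $\mathcal{A}$ has all of its $v_i/c_i$ lying in $[S_A, S_B]$, these gaps are themselves bounded by $S_B - S_A$, so the structural quantity governing the inflection point cannot exceed what the band permits.

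The second step is a monotonicity (domination) statement: enlarging the admissible ratio interval can only enlarge the inflection point. Concretely, if one instance has all ratios inside $[S_A, S_B]$ and another has ratios inside a wider interval, then each merge or swap that a sequential feasible solution of the first instance can avoid is also avoidable in a correspondingly extremal configuration with ratios spread across $[S_A, S_B]$. Establishing this requires unfolding the notion of sequential feasibility used in Theorem \ref{thm:2} and verifying that the number of indices at which feasibility fails to collapse is controlled by the ratio gaps, which are capped by $S_B - S_A$ together with the normalization $v_n/c_n \leq M$ inherited from the theorem. This monotonicity is the heart of the argument.

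With monotonicity in hand, I would define $\bar{m}$ to be the inflection point of the extremal configuration associated with the band $[S_A, S_B]$, equivalently the supremum of the inflection points over all ARP instances whose ratios lie in $[S_A, S_B]$. Finiteness of this supremum follows either by exhibiting the extremal instance explicitly, or by a compactness argument over the compact parameter region carved out by $S_A \leq v_i/c_i \leq S_B$ and $v_n/c_n \leq M$, on which the (integer-valued) inflection point is bounded. Applying Theorem \ref{thm:2} to an arbitrary instance chosen from $\mathcal{A}$ and invoking the domination step then gives that its inflection-point index $m$ satisfies $m \leq \bar{m}$, hence $2m < 2\bar{m}$, which is the assertion of Corollary \ref{cor:1}.

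I expect the main obstacle to be precisely the monotonicity step: one must show that the defining condition for the inflection point depends on the ratios \emph{only} through their confinement to $[S_A, S_B]$, and that widening this band never decreases the inflection point. Once this dependence is made explicit through the combinatorics of sequential feasible solutions, the uniform bound $\bar{m}$ and the conclusion follow routinely.
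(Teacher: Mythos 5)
First, a point of comparison: this paper contains no proof of Corollary \ref{cor:1} at all --- the statement is imported, citation and all, from \cite{Cui23P}, and even the notion of ``sequential feasible solution'' on which your argument rests is never defined here. So your proposal can only be judged on its own terms, and on those terms it has two genuine gaps. The load-bearing step, your monotonicity/domination claim (widening the ratio band $[S_A,S_B]$ can only increase the inflection point), is asserted rather than proved; you yourself flag it as the main obstacle, and nothing in the available material supplies the combinatorial structure needed to verify it. Your fallback for the finiteness of $\bar{m} := \sup m$ --- compactness of the parameter region plus integer-valuedness of $m$ --- also does not work: an integer-valued function on a compact set can perfectly well be unbounded (take $f(x)=\lfloor 1/x\rfloor$ on $(0,1]$ with $f(0)=0$), so you would need upper semicontinuity of the inflection point in the instance data, or an explicit bound of $m$ in terms of $S_A$, $S_B$, $M$, and that is precisely what is missing. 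There is also a small mismatch at the end: $m\le\bar{m}$ yields $2m\le 2\bar{m}$, not the strict inequality $2m<2\bar{m}$ claimed in the statement, so $\bar{m}$ would have to be taken as, say, the maximum plus one.

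Note also that the literal reading of the statement admits a far more elementary route than the one you attempt. $\mathcal{A}$ is a \emph{finite} set of $n$ airplanes, and ``any ARP instance chosen from $\mathcal{A}$'' ranges over its sub-instances, of which there are at most $2^n$. Applying Theorem \ref{thm:2} to each one (granting that each satisfies its hypotheses, including the $v_i/c_i^2$ ordering that the corollary does not even state) yields finitely many indices $m$, and $\bar{m}$ can simply be taken to be their maximum plus one --- no monotonicity lemma, no compactness. What you are actually trying to prove is the stronger, uniform statement that the paper's surrounding discussion and Fig. \ref{fig:P/NP} appear to intend: a bound on $m$ depending only on the band $[S_A,S_B]$, with $S_B$ playing the role of the cap $M$ in Theorem \ref{thm:2}. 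That stronger claim may well be what \cite{Cui23P} establishes, but it requires exactly the quantitative dependence of $m$ on $M$ that neither this paper nor your sketch makes explicit; as written, your proposal assumes it in the guise of the unproven domination lemma.
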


\begin{figure}[ht]
  \includegraphics[width=0.75\textwidth]{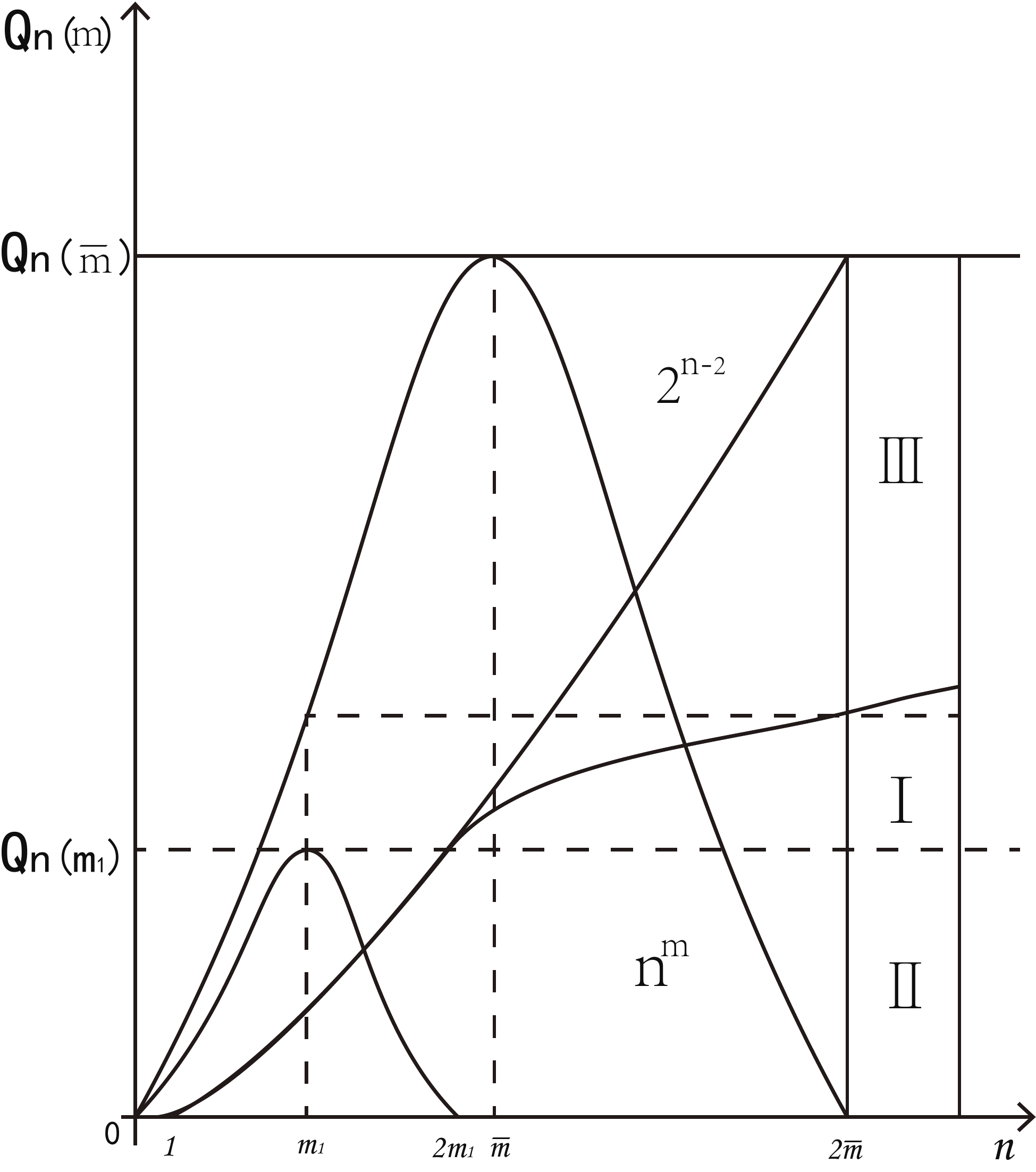}
  \caption{The ARP is NP-complete and is solvable in polynomial time. $Q_n(m) \leq 2^{n - 2}$ when $n \leq 2m$, and $Q_n(m) \leq n^m$ when $n$ is greater than $2m$. When $m = m_1$, $Q_n(m_1) \leq 2^{n - 2}$ for $n \leq 2m_1$, and $Q_n(m_1) < n^{m_1}$ when $n > 2m_1$. For a given set of airplanes, $\bar{m}$ is the upper bound of $m$ according to Corollary \ref{cor:2}. Thus $Q_n(\bar{m}) \leq 2^{n - 2}$ for $n \leq 2\bar{m}$, and $Q_n(\bar{m}) < n^{\bar{m}}$ when $n > 2\bar{m}$.}
  \label{fig:P/NP}
\end{figure}

The main idea is shown in Fig. \ref{fig:P/NP}. When $n$ is in small size, the sequential search algorithm runs in exponential time. However, when $n$ gets larger than a particular number, the cumulative sums of fuel consumption rates referred to $C$ by (\ref{eqt2}) is monotonically increasing, and the maximal flight length of single airplane is upper bounded. There must exist an index $m$, according to Theorem \ref{thm:2}, such that when $n > 2m$, the growth rate of the number of sequential feasible solutions is getting slowing down to a polynomial time. Since the maximal flight length of single airplane is limited, by Theorem \ref{cor:1}, there exist an index $\bar{m}$, which is referred to an upper bound of $m$. When $n > 2 \bar{m}$, the given ARP instance is solvable in polynomial time. Therefore, ARP is NP-complete and is solvable in polynomial time.

\begin{theorem}\label{thm:3} Let X be an NP-complete problem, then X is solvable in polynomial time if and only if $\mathcal{NP} = \mathcal{P}$  holds. (see Theorem 8.12 in Book \cite{tardos06})\end{theorem}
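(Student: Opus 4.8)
The plan is to prove the two directions of the biconditional separately, using only the definition of NP-completeness together with the closure of $\mathcal{P}$ under polynomial-time reductions.

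For the easy direction, I would assume $\mathcal{NP} = \mathcal{P}$. Since $X$ is NP-complete, by definition $X \in \mathcal{NP}$. The hypothesis then immediately gives $X \in \mathcal{P}$, that is, $X$ is solvable in polynomial time. No further work is required here.

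For the forward direction, I would assume that $X$ is solvable in polynomial time, i.e., $X \in \mathcal{P}$, and aim to conclude $\mathcal{NP} = \mathcal{P}$. The inclusion $\mathcal{P} \subseteq \mathcal{NP}$ holds unconditionally, so the content lies in the reverse inclusion. I would take an arbitrary problem $Y \in \mathcal{NP}$; because $X$ is NP-complete, the definition supplies a polynomial-time reduction $Y \leq_P X$. The key step is then to compose this reduction with a polynomial-time algorithm for $X$: on an input of size $n$, first compute the reduction in time $p(n)$ for some polynomial $p$, producing an instance of $X$; then run the polynomial-time decider for $X$ on that instance. Since polynomials are closed under composition, $Y$ is decided in polynomial time, so $Y \in \mathcal{P}$. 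As $Y$ was arbitrary, $\mathcal{NP} \subseteq \mathcal{P}$, and combined with $\mathcal{P} \subseteq \mathcal{NP}$ this yields $\mathcal{NP} = \mathcal{P}$.

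The point that most deserves care, rather than a genuine obstacle, is this composition step: one must observe that a machine running in time $p(n)$ can write an output of length at most $p(n)$, so the transformed instance of $X$ has size polynomially bounded in $n$, and hence running the polynomial-time decider for $X$ on it still takes only polynomial time in $n$. This is precisely the standard argument that $\mathcal{P}$ is closed under polynomial-time (Karp) reductions, and the full statement is recorded as Theorem 8.12 in \cite{tardos06}.
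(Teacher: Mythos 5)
Your proof is correct and is essentially the standard argument: the paper itself gives no proof of this statement, deferring entirely to Theorem 8.12 of \cite{tardos06}, and your two-direction argument (the trivial direction from $\mathcal{NP}=\mathcal{P}$, and the reduction-composition argument using closure of $\mathcal{P}$ under polynomial-time reductions) is exactly the proof given there. Your attention to the output-length bound in the composition step is the right detail to flag, and nothing is missing.
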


\section{Conclusion}
\label{sec:4}

\qquad It is proved that the ARP is NP-complete. In Article \cite{Cui23P}, ARP is proved to be solvable in polynomial time. According to Theorem \ref{thm:3}, we find such a problem as ARP, that is both NP-complete and is solvable in polynomial time.

\bibliography{references}

\begin{thebibliography}{15}
\providecommand{\natexlab}[1]{#1}
\providecommand{\url}[1]{{#1}}
\providecommand{\urlprefix}{URL }
\providecommand{\doi}[1]{\url{https://doi.org/#1}}
\providecommand{\eprint}[2][]{\url{#2}}
 \bibcommenthead

\bibitem[{Cui and Li(2023{\natexlab{a}})}]{Cui23NP}
Cui J, Li X (2023{\natexlab{a}}) The $n$-vehicle exploration problem is
  np-complete. arXiv:2304.03965v1, \doi{10.48550/arXiv.2304.03965}

\bibitem[{Cui and Li(2023{\natexlab{b}})}]{Cui23P}
Cui J, Li X (2023{\natexlab{b}}) A polynomial-time algorithm to solve the
  aircraft refueling problem: the sequential search algorithma polynomial-time
  algorithm to solve the aircraft refueling problem: the sequential search
  algorithm. arXiv:2210.11634v2, \doi{10.48550/arXiv.2210.11634}

\bibitem[{Gamow and Stern(1958)}]{puzzle58}
Gamow G, Stern M (1958) Puzzle-Math. Viking Press, New York

\bibitem[{H{\"o}hn(2013)}]{hohn13}
H{\"o}hn W (2013) Complexity of generalized min-sum scheduling. Scheduling:
  Dagstuhl Seminar Reports 13111, \urlprefix\url{http://www.dagstuhl.de/13111}

\bibitem[{Iftah and Danny(2019)}]{gamzu19}
Iftah G, Danny S (2019) A polynomial-time approximateion shceme for the
  airplane refueling problem. J Sched 22:429--444.
  \doi{10.1007/s10951-018-0569-x}

\bibitem[{Kleinberg and Tardos(2006)}]{tardos06}
Kleinberg J, Tardos {\'E} (2006) Algorithm design. Pearson Education Asia
  Limited and Tsinghua University Press, Cornell University

\bibitem[{Li et~al(2019)Li, Hu, Luo, and Cui}]{lijs19}
Li J, Hu X, Luo J, et~al (2019) A fast exact algorithm for airplane refueling
  problem. Combinatorial Optimization and Applications, Proceedings LNCS 11949,
  Springer Internaticonal Publishing:316--327.
  \doi{10.1007/978-3-030-36412-0_25}

\bibitem[{Li and Cui(2009)}]{lixy09}
Li X, Cui J (2009) Real-time algorithm scheme for $n$-vehicle exploration
  problem. Combinatorial optimization and applications COCOA 2009, HuangShan,
  China, June 10-12, 2009 Proceedings Berlin: Springer LNCS 5573, Springer,
  Berlin, Heidelberg:287--300. \doi{10.1007/978-3-642-02026-1_27}

\bibitem[{Michael R.~Garey(1979)}]{garey79}
Michael R.~Garey DSJ (1979) Computers and intractability: a guide to the theory
  of NP-Completeness. W. H. Freeman and Company, New York

\bibitem[{Pereira and V{\'a}squez(2017)}]{vasquez17}
Pereira J, V{\'a}squez OC (2017) The single machine weighted mean squared
  deviation problem. European J Oper Res 2(261):515--529.
  \doi{10.1016/j.ejor.2017.03.001}

\bibitem[{V{\'a}squez(2015)}]{vasquez15}
V{\'a}squez OC (2015) For the airplane refueling problem local precedence
  implies global precedence. Optim Lett 9:663--675.
  \doi{10.1007/s11590-014-0758-2}

\bibitem[{Woeginger(2010)}]{woeginger10}
Woeginger GJ (2010) The airplane refueling problem. Scheduling: Dagstuhl
  Seminar Proceedings 10071,
  \urlprefix\url{http://drops.dagstuhl.de/opus/volltexte/2010/2536}

\bibitem[{Yang(1999)}]{shuxue}
Yang R (1999) A travelling problem and its extended research. Bulletin of Maths
  9:44--45. \doi{CNKI:SUN:SXTB.0.1999-09-025}

\bibitem[{Yu and Cui(2018)}]{yu18}
Yu F, Cui J (2018) Research on the efficient computation mechanism - in the
  case of n-vehicle exploration problem. Acta Math Appl Sin Engl Ser
  34(3):645--657. \doi{10.1007/s10255-018-0774-6}

\bibitem[{Zhang and Cui(2021)}]{zhang21}
Zhang G, Cui J (2021) A novel milp model for n-vehicle exploration problem. J
  Oper Res Soc China 9:359--373. \doi{10.1007/s40305-019-00289-2}

\end{thebibliography}

\end{document}